\documentclass[conference]{IEEEtran}
\IEEEoverridecommandlockouts
\ifCLASSINFOpdf
\else
\fi

\usepackage{tabularx}
\usepackage{threeparttable}
\usepackage{mathrsfs}
\usepackage{color}
\usepackage{amsfonts,amssymb}
\usepackage{amsfonts}
\usepackage{subfigure}
\usepackage{booktabs}

\usepackage{amsmath}
\usepackage{enumerate}
\usepackage{algorithm}
\usepackage{algorithmic}
\usepackage{bm}
\usepackage{makecell}
\usepackage{multirow}
\usepackage{mathtools}
\usepackage{bbm}
\usepackage{dsfont}
\usepackage{pifont}
\usepackage{amsthm}
\usepackage{cite}
\usepackage{balance}

\newtheorem{remark}{Remark}
\newtheorem{theorem}{Theorem}
\newtheorem{lemma}{Lemma}

\newtheorem{assumption}{Assumption}

\newtheorem{definition}{Definition}

\newtheorem{corallary}{Corallary}


\begin{document}

\title{
Inferring Topology of Networked Dynamical Systems by Active Excitations}
\author{Yushan Li, Jianping He, Cailian Chen and Xinping Guan
\thanks{The authors are with Dept. of Automation, Shanghai Jiao Tong University, Key Laboratory of System Control and Information Processing, Ministry of Education of China, and Shanghai Engineering Research Center of Intelligent Control and Management, Shanghai, Chin. E-mail address: \{yushan\_li, jphe, cailianchen, xpguan\}@sjtu.edu.cn. 
}%
}

\maketitle

\begin{abstract}
Topology inference for networked dynamical systems (NDSs) has received considerable attention in recent years. 
The majority of pioneering works have dealt with inferring the topology from abundant observations of NDSs, so as to approximate the real one asymptotically. 
Leveraging the characteristic that NDSs will react to various disturbances and the disturbance's influence will consistently spread, this paper focuses on inferring the topology by a few active excitations. 
The key challenge is to distinguish different influences of system noises and excitations from the exhibited state deviations, where the influences will decay with time and the exciatation cannot be arbitrarily large.   
To practice, we propose a one-shot excitation based inference method to infer $h$-hop neighbors of a node. 
The excitation conditions for accurate one-hop neighbor inference are first derived with probability guarantees. 
Then, we extend the results to $h$-hop neighbor inference and multiple excitations cases, providing the explicit relationships between the inference accuracy and excitation magnitude. 
Specifically, the excitation based inference method is not only suitable for scenarios where abundant observations are unavailable, but also can be leveraged as auxiliary means to improve the accuracy of existing methods. 
Simulations are conducted to verify the analytical results.


\end{abstract}

\section{Introduction}

Networked dynamical systems (NDSs) have been extensively used in numerous applications in the last decades, e.g., electric power systems \cite{cavraro2018graph}, transportation systems \cite{deri2016new}, and multi-robot systems \cite{lys21ecc}. 
The topology of NDSs is fundamental to characterizing interactions between individual nodes and determines the system convergence. 
Inferring the topology from observations provides insightful interpretability about NDSs and associated task implementations, and has become a hotspot research topic.  

In the literature, plenty of works have been developed to address the topology inference problem from different aspects \cite{brugere2018network}. 
For instance, in terms of static topology, \cite{etesami2017measuring,santos2019local,lys-cdc2021,jiao2021topologya} focus on inferring the causality/dependency relationships between nodes, while \cite{segarra2017network,segarra2017network-cdc,zhu2020network} reconstruct the topology by finding the most suitable eigenvalues and eigenvectors from the sample covariance matrix. 
Considering the topology is time-varying by rules, available methods include graphical Lasso-based methods \cite{gibberd2014high} and SEM models \cite{baingana2016tracking}, which take the varying topology as a sequence of static topologies and infer them, respectively. 
In addition to the dynamic topology inference, many kernel-based methods are proposed to deal with cases with nonlinear system models \cite{karanikolas2016multi,wang2018inferring}. 

Despite the tremendous advances of the above works, almost all of the approaches are based on a large scale of observations over the systems. 
In other words, the feasibility lies in digging up the regularity of the dynamical evolution process from the observation sequences, which corresponds to the common intuition that more data make the interpretability better \cite{dong2019learning}. 
Unfortunately, when the observations over the NDS are very limited, the aforementioned methods cannot work well. 
For example, for a linear time-invariant NDS of $n$ nodes, at least $(n+1)$ groups of consecutive global observations are required to obtain a unique least square estimate of the topology matrix. 
When more observations are not allowable due to some practical limitations, directly inferring the topology from observations will be extremely difficult. 

Inspired by the phenomenon that a thrown stone into water will cause waves, we are able to proactively inject inputs into the systems to excite corresponding reaction behaviors, i.e., the injected inputs on one node will spread to other neighbor nodes.  
Related examples include using Traceroute to probe the routing topology of the Internet \cite{holbert2015network}, or utilizing inverters to probe the electric distribution network \cite{cavraro2019inverter}. 
Therefore, it is possible to reveal the underlying topology of NDSs by investigating the relationships between the excitations and reactions \cite{weerts2015identifiability,hendrickx2018identifiability,bazanella2019network}. 
This idea has motivated the study of this paper, where we aim to leverage a few active excitations to do the inference tasks. 
It is worth noting that if the excitations are allowed to be abundant, then the problem falls into the realm of typical system identification \cite{coutino2020state,xyminput2022acc}, which is not the focus of this paper. 

Few excitations indicate small inference costs but incur new challenges. 
On the one hand, the influence of the excitation is closely coupled with that of stochastic noises, making it hard to directly distinguish their difference. 
On the other hand, the spreading effect of the excitation will decay with time and the excitation cannot be arbitrarily large, limiting the scope and accuracy of the inferred topology. 
To address these issues, we introduce the probability measurement to infer the topology from a local node, and demonstrate how to determine whether the information flow between two nodes exists. 
The main contributions are summarized as follows.

\begin{itemize}
\item We investigate the possibility of inferring the topology of NDSs by a few active excitations, taking both the process and measurement noises into account. 
Specifically, we utilize hypothesis test to establish criteria of how to determine the connections between nodes from the exhibited state deviations after excitations.  

\item Considering the spreading effects of excitations in NDSs, we first propose one-shot excitation based method to infer one-hop neighbors of a single node. 
Then, we prove the critical excitation condition given tolerable misjudgment probability, providing reliable excitation design guidance. 

\item Based on the one-hop inference procedures, we extend the theoretical analysis to multi-hop neighbor inference by one-shot excitation and multiple excitation cases, respectively. 
The relationship between inference accuracy and excitation magnitude is derived with probability guarantees.  
Simulations verify our theoretical results. 
\end{itemize}

The proposed inference method by a few active excitations applies to situations where the observations about NDSs are not sufficient. 
It can also be leveraged as an auxiliary measure to enhance the accuracy of existing methods by large scales of observations, by treating the inferred results as the constraints in counterpart problem modeling. 
The remainder of this paper is organized as follows.
In Section \ref{sec:preliminaries}, some preliminaries of NDSs and problem modeling are presented. 
The inference method and performance analysis are provided in Section \ref{sec:results}.
Simulation results are shown in Section \ref{sec:simulation}.
Finally, Section \ref{sec:conclusion} concludes the paper.

\section{Preliminaries and Problem Formulation}\label{sec:preliminaries}
\subsection{Graph Basics and Notations}
Let $\mathcal{G}=(\mathcal{V},\mathcal{E})$ be a directed graph that models the networked system, where $\mathcal{V}=\{1, \cdots, n\}$ is the finite set of nodes and $\mathcal{E}\subseteq \mathcal{V}\times \mathcal{V}$ is the set of interaction edges.
An edge $(i,j)\in \mathcal{E}$ indicates that $i$ will use information from $j$.
The adjacency matrix $A=[a_{ij}]_{n \times n}$ of $\mathcal{G}$ is defined such that ${a}_{ij}\!>\!0$ if $(i,j)$ exists, and ${a}_{ij}\!=\!0$ otherwise.
Denote ${\mathcal{N}_i}=\{j\in \mathcal{V}:a_{ij}>0\}$ as the in-neighbor set of $i$, and $d_i=\left| {\mathcal{N}_i} \right |$ as its in-degree. 
Throughout this paper, let $\bm{0}$ and $\bm{1}$ be all-zero and all-one matrices in compatible dimensions, $\rho_{\min}(M)$ and $\rho_{\max}(M)$ be the smallest and largest eigenvalues of the matrix $M$, respectively.

\subsection{System Model}
Consider the following networked dynamical model
\begin{equation}\label{eq:global_model}
\begin{aligned}
x_{t}&=Wx_{t-1}+\theta_{t-1}, \\
y_{t}&=x_{t}+\upsilon_{t},
\end{aligned}
\end{equation}
where $x_t$ and $y_t$ represents the system state and corresponding observation at time $t$, $W=[w_{ij}]_{n \times n}$ is the interaction topology matrix related to the adjacent matrix $A$, 
and $\theta_{t}$ and $\upsilon$ represent the process and observation noises, satisfying the following Gauss-Markov assumption. 
\begin{assumption}
$\theta_{t}$ and $\upsilon_{t}$ are i.i.d. Gaussian noises, subject to ${N}(0,\sigma^2_{\theta} I)$ and ${N}(0,\sigma^2_{\upsilon} I)$, respectively. 
They are also independent of $\{x_{t'}\}_{t'=0}^{t'=t}$ and $\{y_{t'}\}_{t'=0}^{t'=t}$. 
\end{assumption}

Next, we characterize the stability of \eqref{eq:global_model} by defining 
\begin{equation}
\begin{aligned}
\mathcal{S}_a=&\{Z\in \mathbb{R}^{n \times n}, \rho_{\max}(Z)<1\}, \\
\mathcal{S}_m=&\{Z\in \mathbb{R}^{n \times n}, \rho_{\max}(Z)=1~\text{~and the geometric} \\
&\text{multiplicity of eigenvalue 1 equals to one} \} .
\end{aligned}
\end{equation}
Then, $W$ is called asymptotically stable if $W\in\mathcal{S}_a$, or marginally stable matrix if $W\in\mathcal{S}_m$. 
Concerning its setup, popular choices include the Laplacian and Metropolis rules \cite{sayed2014adaptation}, which are given by 
\begin{align}
w_{ij}&=\left \{
\begin{aligned}
&\gamma a_{ij}/ \max\{d_i, i\in\mathcal{V}\},&&\text{by Laplacian rule} ,\\
&a_{ij}/ \max \{d_{i}, d_{j}\},&&\text{by Metropolis rule},
\end{aligned}
\right. \label{eq:topo-rule} \\
w_{ii}&= 1-\sum\nolimits_{j \neq i} w_{ij}. \label{eq:topo-rule2}
\end{align}
where the auxiliary parameter $\gamma$ satisfies $0<\gamma\le 1$. 
Note that if $W$ is specified by either one of the two rules, then $W\in\mathcal{S}_m$.
A typical matrix in $\mathcal{S}_a$ can be directly obtained via multiplying (\ref{eq:topo-rule}) and (\ref{eq:topo-rule2}) by a factor $0<\alpha<1$, which is common in adaptive diffusion networks \cite{matta2018consistent}. 
Based on \eqref{eq:global_model}, the observation $y_t$ can be recursively expanded as
\begin{align} \label{eq:expand_form}
y_{t}=x_t+\upsilon_t=W^{t}x_{0}+\sum\nolimits_{m = 1}^{t} W^{m-1} \theta_{t-m} + \upsilon_t.
\end{align}
Considering different stabilities, it holds that
\begin{equation}
\mathop {\lim }\limits_{t \to \infty } W^t=\left \{
\begin{aligned}
&\bm{0},~&&\text{if}~W \in \mathcal{S}_a \\
&W^{\infty},~&& \text{if}~W \in \mathcal{S}_m,
\end{aligned}\right.
\end{equation}
where $\|W^{\infty}\|<\infty$. 
Therefore, if $W\in \mathcal{S}_a \cup \mathcal{S}_m$, $y_t$ is strictly bounded in the expectation sense.

\subsection{Inference Modeling and Problem of Interest}
Since the information flow between nodes is specified by the topology of NDSs, we first define the $h$-hop neighbor of a single node. 
\begin{definition}[$h$-hop out-neighbor]
Node $i$ is a $h$-hop out-neighbor of node $j$ if the minimal edge number of an acyclic path from $j$ to $i$ is $h$, satisfying
\begin{equation}\label{}
\prod\limits_{l = 1}^{h} {{a_{{i_l}{i_{l + 1}}}}}  = {a_{{i_1}{i_2}}}{a_{{i_2}{i_3}}}{a_{{i_3}{i_4}}}...{a_{{i_{h - 1}}{i_{h}}}}{a_{{i_{h }}{j}}}>0,
\end{equation}
where node ${i_1}=i$ and ${i_{h+1}}=j$. 
All the $h$-hop out-neighbors are represented by the set $\mathcal{N}_{j,h}^{out}$. 
\end{definition}
Note that when the topology is undirected, there is no need to differentiate the in/out-neighbors. 
If $j\in\mathcal{N}_{j,h}^{out}$, node $j$ is also called the $h$-hop in-neighbor of node $i$. 
Unless otherwise specified, we mainly focus on the $h$-hop out-neighbors of a node in the following. 
To present an explicit expression for $\mathcal{N}_{j,h}^{out}$, we first define $\mathcal{N}_{j,h}^{e}$ as the node set where all the nodes can reached from node $j$ within $h$ hops. 
Then, $\mathcal{N}_{j,h}^{out}$ is recursively formulated as  
\begin{equation}\label{eq:hop_neighbor}
\mathcal{N}_{j,h}^{out}=\mathcal{N}_{j,h}^{e}\backslash \left \{ \mathop \cup \limits_{ l=1 }^{h-1} \mathcal{N}_{j,l}^{out} \right \}. 
\end{equation}
When $h=1$, $\mathcal{N}_{j,1}^{out}= \mathcal{N}_{j,1}^{e}$. 
The following assumption is made throughout this paper. 
\begin{assumption}\label{as:stable}
The topology matrix $W \in  \mathcal{S}_a \cup\mathcal{S}_m$, and the elements of $W$ are all non-negative. 
For all $w_{ij}>0$, there exists a lower bound $\underline{w}$ such that $w_{ij}\ge\underline{w}>0$.  
\end{assumption}

Finally, the problem of interest is formulated as follows. 
Consider that there are no sufficient observations of the NDS model \eqref{eq:global_model} to support existing estimation or regression methods of inferring topology, e.g., the causality based estimator in \cite{lys-cdc2021}. 
Leveraging the characteristic that a NDS is easily subjected to various disturbances and exhibits state deviation, 
we aim to reduce the inference dependence on observation scales, and propose an 
active excitation based method to infer the topology from limited new observations. 
Mathematically, let $e_{t}^{j}$ be the excitation input on node $j$ at time $t$, and $\tilde y_{t + h}^{i}$ be the observation of $i$ at time $(t+h)$. 
Then, the goal of this paper is to find $\mathcal{N}_{j,h}^{out}$ from the limited observations $\{y_{t},\tilde y_{t + h'},h'=1,\cdots,h \}$. 

The above problem is very challenging, as the stochastic process and measurement noises will also affect $\{\tilde y_{t + h'},h'=1,\cdots,h \}$ and even accumulate. 
We will address these issues from the following aspects. 
\begin{itemize}
\item To make the excitation's impact differentiable, we resort to the tool of hypothesis testing to derive the conditions for excitation magnitude to guarantee arbitrary misjudgement probability of one-hop neighbor inference. 
\item To overcome the decaying effects of excitations, we establish the probabilistic relationship between the one-shot excitation magnitude and the accuracy of $h$-hop neighbor inference. 
\item We further extend the analysis to multiple excitations cases and illustrate how to use excitations to improve the accuracy of existing topology inference methods. 
\end{itemize}


\section{Excitation-based Inference Method}\label{sec:results}
In this section, we first analyze the reaction behavior of a NDS under excitation inputs. 
Then, we focus on how to infer the one-hop neighbors by one-shot excitation and characterize the inference accuracy in probability. 
Finally, we discuss how to infer the $h$-hop neighbors and multiple excitations cases. 

\subsection{Observation Modeling Under excitation}

Since only $\{y_t\}_{t=0}^{T}$ are directly available, for every two adjacent observations, it follows that
\begin{align}\label{eq:two-observation}
y_t&=W (y_{t-1}-\upsilon_{t-1})+\theta_{t-1}+\upsilon_t \nonumber \\
&=Wy_{t-1}+\omega_t,
\end{align}
where $\omega_t=-W\upsilon_{t-1}+\theta_{t-1}+\upsilon_t$, satisfying $N(0,\sigma_{\upsilon}^2 WW^\mathsf{T} + \sigma_{\upsilon}^2 I + \sigma_{\theta}^2 I)$. 
Besides, $\omega_t$ is independent of all $\{x_{t'}\}_ {t'< t}$ and $\{\theta_{t'}\}_ {t'< t-1}$.
We point out that (\ref{eq:two-observation}) only represents the quantitative relationship between adjacent observations, not a causal dynamical process.

Similar to \eqref{eq:two-observation}, the observation at time $t+h$ can be recursively written as 
\begin{align}
y_{t+h}\!=\!\Gamma(h) y_{t}\!+\!v_{t+h}\!-\!\Gamma(h) v_{t}\!+\!\sum_{m=1}^{h} \Gamma(m\!-\!1) \theta_{t+h-m},
\end{align}
where $\Gamma(h)=W^{h}$ is the $h$-step translation matrix. 
For ease notation, let $\omega_{t,h}=v_{t+h}-\Gamma(h) v_{t}+\sum_{m=1}^{h} \Gamma(m-1) \theta_{t+h-m}$. 
Then, the deviation between $y_{t+h}^{i}$ and $y_{t}^{i}$ is represented by
\begin{align}\label{eq:deviation_h}
y_{t+h}^{i}-y_{t}^{i}&=[\Gamma(h)y_{t}]^{i}-y_{t}^{i}+\omega_{t,h}^{i},
\end{align}
where $\omega_{t,h}^{i} \!\sim\! N(0, \sigma_{\omega,h}^2(i) )$ and $\sigma_{\omega,h}^2(i)$ is given by 
\begin{equation}
\sigma_{\omega,h}^2(i) \!=\! \left(1\!+\!\sum\limits_{j = 1}^{n}\Gamma_{ij}^2(h) \right)\sigma_{\upsilon}^2 \!+\! \left(\sum\limits_{m= 1}^{h} \sum\limits_{j = 1}^{n}\Gamma_{ij}^2(m\!-\!1)\right) \sigma_{\theta}^2,
\end{equation}
which is obtained from the mutual independence of the process and observation noises. 

Note under Assumption \ref{as:stable}, it holds that $\Gamma(h)\bm{1}\!\le\!\bm{1}$ and $\sum\nolimits_{j = 1}^{n}\Gamma_{ij}^2(h)\!\le\!1$. 
Leveraging the two properties, one can induce that 
\begin{align}
|y_{t+h}^{i}-y_{t}^{i}| \le \Delta y_{t}^{\max},~\sigma_{\omega,h}^2(i) \le 2\sigma_{\upsilon}^2 +h \sigma_{\theta}^2,
\end{align}
where the deviation bound $\Delta y_{t}^{\max} $ is given by 
\begin{equation}
\Delta y_{t}^{\max} = \left\{
\begin{aligned}
&\max\{ |y_{t}^{i}-y_{t}^{j}|: i,j\in \mathcal{V}\}, &&\text{if}~W \in \mathcal{S}_m\\
&\max\{ |y_{t}^{i}|: i\in \mathcal{V}\} , &&\text{if}~W \in \mathcal{S}_a. 
\end{aligned}\right.
\end{equation}
It is worth noting that $\Delta y_{t}^{\max}$ will fluctuate around zero as $t$ increases in either case of $\mathcal{S}_m$ and $\mathcal{S}_a$. 
For simplicity without loss of generality, consider node $j$ is injected with positive excitation input $e_{t}^{j}>0$ at time $t$. 
Then, the observation deviation is given by 
\begin{equation}\label{eq:excitation}
\tilde y_{t,h}^{i,\Delta} \!=\!\tilde y_{t + h}^{i}\!-\!y_{t}^{i}\! \le \! \left\{
\begin{aligned}
&\Delta y_{t}^{\max} \!+\! \Gamma_{ij}e_{t}^{j}\!+\! \omega_{t,h}^{i},~\text{if}~\Gamma_{ij}>0,\\
&\Delta y_{t}^{\max} \!+ \!\omega_{t,h}^{i},~\text{if}~\Gamma_{ij}=0.
\end{aligned}
\right.
\end{equation}
Note that the term $\Gamma_{ij}e_{t}^{j}$ in (\ref{eq:excitation}) represents the influence of the excitation input $e_j$ over $i$ after $h$ steps. 
Hereafter, we will drop the subscript $t$ in the variables if it does not cause confusion. 

\begin{remark}
The excitation put $e_t^j$ cannot be arbitrarily large due to the internal constraints in NDSs, otherwise one can easily infer the connections by a extremely large excitation input, which makes the inference trivial. 
Therefore, it is of greater necessity to investigate the relationships between the inference accuracy and excitation magnitude, providing available excitation guidance to obtain accurate inference results with probability guarantees. 
\end{remark}

\subsection{One-hop Neighbor Inference}
After node $j$ is injected with excitation input $e_{t}^{j}$, the one-step observation deviation of node $i$ is given by 
\begin{equation}\label{eq:excitation_1}
\tilde y_{t + 1}^{i,\Delta} \!=\!\tilde y_{t + 1}^{i}\!-\!y_{t}^{i}\! \le \! \left\{
\begin{aligned}
&\Delta y_{t}^{\max} \!+\! w_{ij}e_{t}^{j}\!+\! \omega_{t+1}^{i},~\text{if}~w_{ij}>0,\\
&\Delta y_{t}^{\max} \!+ \!\omega_{t+1}^{i},~\text{if}~w_{ij}=0.
\end{aligned}
\right.
\end{equation}
For legibility, we temporarily assume $\Delta y_{t}^{\max}=0$ and take its influence into consideration after the analysis. 
Since $\tilde y_{t,1}^{i,\Delta}$ is associated with the stochastic process and measurement noises, finding $\mathcal{N}_{j}^{out}$ can be modeled as a typical binary hypothesis testing. 
The null and alternative hypothesis are respectively defined as
\begin{equation}\label{eq:hypothesis}
\left\{
\begin{aligned}
&H_0: i \notin \mathcal{N}_{j}^{out},\\
&H_1: i \in \mathcal{N}_{j}^{out}.
\end{aligned}
\right.
\end{equation}
Then, denote $\Pr\{{H_0}| \tilde y_{t,1}^{i,\Delta} \} $ ($\Pr\{{H_1}| \tilde y_{t,1}^{i,\Delta} \} $) as the probability that $H_0$ ($H_1$) holds given the observation $\tilde y_{t,1}^{i,\Delta}$. 
Then, we have the following decision criterion 
\begin{equation}\label{eq:hypothesis-decision}
\left\{
\begin{aligned}
&\Pr\{ {H_1}| \tilde y_{t,1}^{i,\Delta} \} \ge \Pr\{ {H_0}| \tilde y_{t,1}^{i,\Delta}\} ~ \Rightarrow ~ H_1 ~\text{holds},\\
&\Pr\{ {H_1}| \tilde y_{t,1}^{i,\Delta} \} < \Pr\{{H_0}| \tilde y_{t,1}^{i,\Delta} \} ~ \Rightarrow ~ H_0 ~\text{holds}, 
\end{aligned}
\right.
\end{equation}
which is also called the maximum posterior probability criterion. 
However, it is possible that (\ref{eq:hypothesis-decision}) is misjudged in the test, for example, $H_0$ is true but $H_1$ is decided (Type I Error) or $H_1$ is true but $H_0$ is decided (Type II Error). 
Accordingly, let $\Pr\{D_1|H_0\}$ be the false alarm probability and $\Pr\{D_0|H_1\}$ be the missed detection probability, respectively. 
Therefore, the overall misjudgement probability is given by 
\begin{equation}\label{eq:mis_error}
\delta_e=\Pr\{D_1|H_0\}+\Pr\{D_0|H_1\}. 
\end{equation}

Suppose the inference center has no prior information about $H_1$ and $H_0$, i.e., $\Pr(H_1)=\Pr(H_0)=0.5$. 
Under hypothesis testing (\ref{eq:hypothesis-decision}), the following result presents the probabilistic relationship between the inference accuracy and the injected excitation magnitude. 
\begin{theorem}[Critical excitation for one-hop neighbors]\label{th:hypothesis}
To ensure the misjudgement probability within a threshold $\bar \delta_e$, the excitation $e_{j}$ should satisfy 
\begin{equation}\label{eq:lower_bound}
|e^j|\ge \frac{2\sqrt{2} \sigma_{\omega({i}) }}{w_{ij}} \text{erf}^{-1}(1-\bar \delta_e),
\end{equation}
where the Gaussian error $\text{erf}(z)= \frac{ 2 }{\sqrt{\pi} } \int_{0}^{z}  \exp{ (-r^2) } \mathrm{d}r $ and $\text{erf}^{-1}(\cdot)$ is the reverse mapping of $\text{erf}(z)$. 
\end{theorem}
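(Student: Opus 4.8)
The plan is to set up the binary hypothesis test explicitly under the (temporary) assumption $\Delta y_t^{\max}=0$, compute the two error probabilities in closed form, and then invert the resulting constraint to obtain the stated lower bound on $|e^j|$. Under $H_0$ we have $\tilde y_{t+1}^{i,\Delta}\sim N(0,\sigma_{\omega(i)}^2)$, and under $H_1$ we have $\tilde y_{t+1}^{i,\Delta}\sim N(w_{ij}e^j,\sigma_{\omega(i)}^2)$ — two Gaussians with the same variance whose means are separated by $w_{ij}e^j$. With equal priors $\Pr(H_0)=\Pr(H_1)=1/2$, the maximum a posteriori rule \eqref{eq:hypothesis-decision} reduces to a likelihood ratio test, which for equal-variance Gaussians is simply a threshold at the midpoint of the two means, i.e. decide $H_1$ iff $\tilde y_{t+1}^{i,\Delta}\ge w_{ij}e^j/2$ (taking $e^j>0$).

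First I would write down the false alarm probability $\Pr\{D_1\mid H_0\}=\Pr\{Z\ge w_{ij}e^j/2\}$ for $Z\sim N(0,\sigma_{\omega(i)}^2)$, and the missed detection probability $\Pr\{D_0\mid H_1\}=\Pr\{Z'< w_{ij}e^j/2\}$ for $Z'\sim N(w_{ij}e^j,\sigma_{\omega(i)}^2)$. By symmetry of the Gaussian about its mean these two are equal, each being $Q\!\bigl(w_{ij}e^j/(2\sigma_{\omega(i)})\bigr)$ where $Q$ is the Gaussian tail function; equivalently, in terms of the error function each equals $\tfrac12\bigl(1-\text{erf}(w_{ij}e^j/(2\sqrt2\,\sigma_{\omega(i)}))\bigr)$. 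Hence the total misjudgement probability from \eqref{eq:mis_error} is
\begin{equation}
\delta_e = 1-\text{erf}\!\left(\frac{w_{ij}|e^j|}{2\sqrt2\,\sigma_{\omega(i)}}\right).
\end{equation}
Next I would impose $\delta_e\le\bar\delta_e$, which is equivalent to $\text{erf}\bigl(w_{ij}|e^j|/(2\sqrt2\,\sigma_{\omega(i)})\bigr)\ge 1-\bar\delta_e$; since $\text{erf}$ is strictly increasing, applying $\text{erf}^{-1}$ to both sides and rearranging gives exactly \eqref{eq:lower_bound}. Finally I would note that reinstating $\Delta y_t^{\max}$ only shifts the mean under $H_0$ and enlarges the effective separation needed, so the bound remains a sufficient (conservative) condition; this can be folded into a short closing remark rather than reworked from scratch.

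The main obstacle is not any single calculation — each step is routine Gaussian algebra — but rather being careful about the one-sided versus two-sided nature of the test and the direction of the excitation. Because $e^j>0$ is assumed and the model in \eqref{eq:excitation_1} is an inequality ($\tilde y_{t+1}^{i,\Delta}\le \Delta y_t^{\max}+w_{ij}e^j+\omega_{t+1}^i$) rather than an equality, one should argue that the worst case for the tail bound is attained at the stated deviation bound, so that the probability estimate for the idealized equality model dominates. The other point requiring care is justifying that the MAP rule genuinely collapses to the midpoint threshold for equal variances and equal priors, and that with this threshold the two error types coincide — this symmetry is what produces the clean single-$\text{erf}$ expression and hence the factor $2\sqrt2$ in the denominator of \eqref{eq:lower_bound}.
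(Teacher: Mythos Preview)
Your proposal is correct and follows essentially the same route as the paper: under the temporary assumption $\Delta y_t^{\max}=0$ and equal priors, the MAP rule reduces to a likelihood-ratio test with threshold at the midpoint $z_0=w_{ij}e^j/2$; the two error probabilities are then equal by Gaussian symmetry, sum to $1-\text{erf}\bigl(w_{ij}|e^j|/(2\sqrt{2}\,\sigma_{\omega(i)})\bigr)$, and inverting via the monotonicity of $\text{erf}$ gives \eqref{eq:lower_bound}. The paper carries out the same steps but writes out the likelihood-ratio equation $l_r(z_0)=1$ and the integral substitution $z=z'+z_0$ explicitly rather than invoking the $Q$-function and midpoint-threshold fact directly; your closing remarks on $\Delta y_t^{\max}$ and the sign of $e^j$ also match how the paper handles these issues outside the formal proof.
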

\begin{proof}
The proof consists of two steps. 
First, we prove the decision threshold $z_0$ is given by $z_0=\frac{w_{ij}e^j}{2}$. 
Then, we demonstrate the critical excitation magnitude under the $z_0$. 

For simplicity without losing generality, we begin with the case where the excitation input $e^j>0$. 
Note that $\omega^i$ is a continuous random variable, the likelihood ratio $l_r(z)$ in the test is given by
\begin{equation}
l_r(z)=\frac{ f_{\omega}(z|H_1) }{ f_{\omega}(z|H_0) },
\end{equation} 
where $f_{\omega}(\cdot)$ is the probability density function of $\omega^i$. 
Due to the prior probabilities $\Pr(H_1)=\Pr(H_0)$, the decision threshold $z_0$ satisfies
\begin{equation}\label{eq:likelihood}
l_r(z_0)=\frac{ f_{\omega}(z_0|H_1) }{ f_{\omega}(z_0|H_0) }= \frac{ \Pr\{H_1\} }{\Pr\{H_0\}}=1. 
\end{equation} 
Since $w^i \sim N(0,\sigma^2_{\omega})$, substituting $f_{\omega}(y)= \frac{ 1 }{\sqrt{2\pi}\sigma_{\omega} }\exp{ (-\frac{ z^2 }{2\sigma^2_{\omega} }) }$ into (\ref{eq:likelihood}), it yields that
\begin{equation}\label{eq:likelihood2}
l_r(z_0)=\frac{ \exp{ (-\frac{(z_0-w_{ij}e^j)^2 }{2\sigma^2_{\omega} }) }  } { \exp{ (-\frac{z_0^2 }{2\sigma^2_{\omega} }) } }=1.
\end{equation} 
It follows from (\ref{eq:likelihood2}) that $z_0^2-(z_0-w_{ij}e^j)^2=0 $, leading to  
\begin{equation}\label{eq:decision_threshold}
z_0=\frac{w_{ij}e^j}{2}. 
\end{equation}

Next, by the definition of $\delta_e$, one has  
\begin{small}
\begin{align}\label{eq:proof_error}
\delta_e=&\Pr\{D_1|H_0\}+\Pr\{D_0|H_1\} = \int_{z_0}^{+ \infty}\!\!\!\!\! \frac{ 1 }{\sqrt{2\pi}\sigma_{\omega} } \exp{ (-\frac{z^2 }{2\sigma^2_{\omega} }) } \mathrm{d}z  \nonumber \\
& +  \int_{- \infty}^{z_0} \! \frac{ 1 }{\sqrt{2\pi}\sigma_{\omega} } \exp{ (-\frac{ (z-w_{ij}e^j)^2 }{2\sigma^2_{\omega} }) } \mathrm{d}z .
\end{align}
\end{small} 
\!\!Substitute $z=z'+\frac{w_{ij}e^j}{2}=z'+z_0$ into (\ref{eq:proof_error}), yielding
\begin{small}
\begin{align}\label{eq:proof_error2}
\delta_e=&\Pr\{D_1|H_0\}+\Pr\{D_0|H_1\} \nonumber \\
=& \int_{0}^{+ \infty}\!\!\!\!\! \frac{ 1 }{\sqrt{2\pi}\sigma_{\omega} } \exp{ (-\frac{(z'+z_0)^2 }{2\sigma^2_{\omega} }) } \mathrm{d}z'  \nonumber \\
&+ \int_{- \infty}^{0} \! \frac{ 1 }{\sqrt{2\pi}\sigma_{\omega} } \exp{ (-\frac{ (z'-z_0)^2 }{2\sigma^2_{\omega} }) } \mathrm{d}z' \nonumber \\
=& 2 \int_{0}^{+ \infty}\!\!\!\!\! \frac{ 1 }{\sqrt{2\pi}\sigma_{\omega} } \exp{ (-\frac{(z'+z_0)^2 }{2\sigma^2_{\omega} }) } \mathrm{d}z' \nonumber \\
=& 2 \int_{z_0}^{+ \infty}\!\!\!\!\! \frac{ 1 }{\sqrt{2\pi}\sigma_{\omega} } \exp{ (-\frac{z^2 }{2\sigma^2_{\omega} }) } \mathrm{d}z.
\end{align}  
\end{small}
\!\!Note that $\int_{z_0}^{+ \infty}\!\!\!\!\! \frac{ 1 }{\sqrt{2\pi}\sigma_{\omega}} \exp{ (-\frac{z^2 }{2\sigma^2_{\omega} }) } \mathrm{d}z=(1-\text{erf}(\frac{z_0}{\sqrt{2}\sigma_{\omega}}))/2$, thus it yields that 
\begin{equation}\label{eq:final_delta}
\delta_e=1-\text{erf}(\frac{z_0}{\sqrt{2}\sigma_{\omega}}).
\end{equation}
Substituting $z_0=\frac{w_{ij}e^j}{2}$ and $\delta_e=\bar \delta_e$ into (\ref{eq:final_delta}), we obtain
\begin{equation}
e^j=\frac{2\sqrt{2}\sigma_{\omega}}{w_{ij}} \text{erf}^{-1}(1-\bar \delta_e).
\end{equation}
The result is likewise when $e^j<0$ due to the symmetry of Gaussian distribution. 
By the monotone increasing property of $\text{erf}(z)$, to guarantee $\delta_e\le \bar \delta_e$, the excitation input must satisfy $|e^j|\ge \frac{2\sqrt{2}\sigma_{\omega}}{w_{ij}} \text{erf}^{-1}(1-\bar \delta_e)$. 
The proof is completed. 
\end{proof}

Theorem \ref{th:hypothesis} gives the lower magnitude bound of the excitation input to guarantee the specified misjudgment probability in a single time. 
Given the excitation input $e^j$ satisfying (\ref{eq:lower_bound}), one has with probability at least $(1-\bar \delta_e)$ to accurately discriminate whether $i\in\mathcal{N}_j^{out}$. 
Note that the interaction weight $w_{ij}$ is not priorly known in reality. Thus the decision threshold in theory, $\frac{w_{ij}e^j}{2}$, is unavailable. 
However, we can enable the hypothesis test by specifying the least interaction weight that one wishes to discriminate between two nodes. 

To practice, since $\|W\|_{F}\le\sqrt{n}\|W\|\le\sqrt{n}$, thus we have $\sum\limits_{j = 1}^{n}w_{ij}^2 \le n$ and 
\begin{equation}
\sigma_{\omega}^2(i)\le(1+n)\sigma_{\upsilon}^2+\sigma_{\theta}^2=\bar \sigma_{\omega}^2. 
\end{equation}
Specifically, if $W$ is row-stochastic, then the upper bound $\bar \sigma_{\omega}^2$ can be further reduced to $2\sigma_{\upsilon}^2+\sigma_{\theta}^2$. 
Next, suppose that one aims to judge whether $i\in\mathcal{N}_j^{out}$ such that $w_{ij}>\underline{w}$, where $\underline{w}$ is the weight lower bound. 
Given the desired error probability bound $\bar\delta_e$ and the excitation input $e^j$ such that $e^j= \frac{2\sqrt{2} \bar\sigma_{\omega }}{ \underline{w} } \text{erf}^{-1}(1-\bar \delta_e)$, then with probability at $1-\bar \delta_e$ one can discriminate whether $i\in\mathcal{N}_j^{out}$ by 
\begin{equation}\label{eq:1-hop-decision}
\left\{
\begin{aligned}
&i\in\mathcal{N}_j^{out},&&~\text{if}~ |\tilde y^{i,\Delta}_{t+1}| \ge \Delta y_{t}^{\max} + \frac{\underline{w} |e^j|}{2}, \\
&i\notin\mathcal{N}_j^{out},&& 
~\text{else}, 
\end{aligned}
\right.
\end{equation}
where the parameters in (\ref{eq:1-hop-decision}) are all computable or known. 
Applying (\ref{eq:1-hop-decision}) to all other node and one can obtain an estimated set of $\mathcal{N}_j^{out}$. 
Note that although the observation deviation $| (Wy_{t})^{i}-y_{t}^{i} |$ will affect the performance of excitation based topology inference, its influence is strictly bounded under Assumption \ref{as:stable}. 
The whole procedures are summarized in Algorithm \ref{excitation-algo}, where we use the lower bound $\underline{w}$ that is sufficient to guarantee the accuracy probability. 


A direct result from Theorem \ref{th:hypothesis} is $\mathop {\lim }\limits_{|e^j| \to \infty } \delta_e=0$, which corresponds to the common intuition. 
As long as the excitation input is large enough, the one-hop neighbors of $j$ can always be inferred. 
Under this situation, it is also very likely that the two-hop (even more) out-neighbors of the excited node can also be identified by just single excitation. 

\begin{algorithm}[t]
    \caption{Excitation-based Topology Inference}
    \label{excitation-algo}
    \begin{algorithmic}[1]
    \REQUIRE{Observations $y_t$, target excited node $j$, desired lower bound of interaction weight $\underline{w}$, upper bound $\bar\sigma_{\omega}$, and tolerant error probability $\bar \delta_e$. }
    \ENSURE{Estimation of the one-hop out-neighbor of $j$, $\hat{\mathcal{N}}_{j}^{out}$.}
    \STATE Initialize $\hat{\mathcal{N}}_{j}^{out}=\emptyset$.
    \STATE Calculate the critical excitation $e^j= \frac{2\sqrt{2} \bar\sigma_{\omega }}{ \underline{w} } \text{erf}^{-1}(1-\bar \delta_e)$.
    \STATE Excite node $j$ with $e^j$ and obtain the observation $y_{t+1}$. 
    \STATE Compute $\Delta y_{t}^{\max} = \max\{ |y_{t}^{i}-y_{t}^{j}|: i,j\in \mathcal{V}\}$.
    \FOR {$i \in\mathcal{V}$}
    {
      \STATE Compute the observation deviation $ \tilde y^{i,\Delta}_{t,1} = y_{t+1}^{i}-y_{t}^{i}$.
      	\IF {$ \tilde y^{i,\Delta}_{t,1} > \Delta y_{t}^{\max} + \frac{\underline{w} e^j}{2}$}
		{
			\STATE $\hat{\mathcal{N}}_{j}^{out}= \hat{\mathcal{N}}_{j}^{out}\cup \{i\}$.
		}
	  	\ENDIF
    }
    \ENDFOR
    \RETURN The one-hop neighbor set estimation $\hat{\mathcal{N}}_{j}^{out}$.  
    \end{algorithmic}
\end{algorithm}

\subsection{Multi-hop Neighbor Inference}
In this part, we will demonstrate how to identify multi-hop out-neighbors of a node by single excitation. 
Similar with the hypothesis (\ref{eq:hypothesis}), we first define the following hypothesis that tests whether $i \in \mathcal{N}_{j,h}^{e}$, i.e., 
\begin{equation}\label{eq:hypothesis2}
\left\{
\begin{aligned}
&H_0(h): i \notin \mathcal{N}_{j,h}^{e},\\
&H_1(h): i \in \mathcal{N}_{j,h}^{e}.
\end{aligned}
\right.
\end{equation}
Note that \eqref{eq:hypothesis2} is a test using the observation deviation $\tilde y_{t,h}^{i,\Delta}$ to judge whether node $i$ is an out-neighbor of node $i$ within $h$ steps. 
Although it cannot infer the $h$-hop neighbor directly, valuable information can still be extracted for the final inference. 
To begin with, we present the following result.


\begin{lemma}[Critical excitation for neighbors within $h$ hops]\label{lemma:hypothesis2}
Under hypothesis test (\ref{eq:hypothesis2}), to ensure the misjudgement probability for all the neighbor within h-hop is lower than $\bar \delta_e$, the excitation $e_{j}$ should satisfy 
\begin{equation}
|e^j|\ge \frac{2\sqrt{2}\sigma_{\omega,h}}{ \Gamma_{ij}(h) } \text{erf}^{-1}(1-\bar \delta_e). 
\end{equation}
\end{lemma}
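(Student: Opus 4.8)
The plan is to mirror the proof of Theorem~\ref{th:hypothesis} almost verbatim, replacing the one-step quantities by their $h$-step analogues, and then to handle the new feature of Lemma~\ref{lemma:hypothesis2}: the requirement is a \emph{uniform} guarantee over all nodes within $h$ hops, not just a fixed pair $(i,j)$. First I would fix a candidate node $i$ and, under the temporary assumption $\Delta y_t^{\max}=0$, write the observation deviation as in \eqref{eq:excitation}: under $H_1(h)$ we have $\tilde y_{t,h}^{i,\Delta}\sim N(\Gamma_{ij}(h)e^j,\sigma_{\omega,h}^2(i))$ and under $H_0(h)$ we have $\tilde y_{t,h}^{i,\Delta}\sim N(0,\sigma_{\omega,h}^2(i))$. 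Exactly as in \eqref{eq:likelihood}--\eqref{eq:decision_threshold}, equal priors give the likelihood-ratio threshold $z_0=\Gamma_{ij}(h)e^j/2$, and the same change of variables $z=z'+z_0$ used in \eqref{eq:proof_error2} collapses the false-alarm plus missed-detection integral to $\delta_e=1-\text{erf}\!\left(z_0/(\sqrt2\,\sigma_{\omega,h}(i))\right)$. Setting $\delta_e=\bar\delta_e$ and inverting $\text{erf}$ yields $|e^j|=\frac{2\sqrt2\,\sigma_{\omega,h}(i)}{\Gamma_{ij}(h)}\text{erf}^{-1}(1-\bar\delta_e)$ for that single node, with the $|e^j|\ge$ version following from monotonicity of $\text{erf}$; the $e^j<0$ case is symmetric.

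The step that is genuinely new — and the one I expect to be the main obstacle — is turning the per-node bound into the uniform bound claimed in the lemma, i.e.\ justifying that it suffices to use $\sigma_{\omega,h}$ (the bound $\sqrt{2\sigma_\upsilon^2+h\sigma_\theta^2}$, or the looser $\bar\sigma_{\omega,h}$) in place of $\sigma_{\omega,h}(i)$, and to use a worst-case $\Gamma_{ij}(h)$. The right-hand side $\frac{2\sqrt2\,\sigma_{\omega,h}(i)}{\Gamma_{ij}(h)}\text{erf}^{-1}(1-\bar\delta_e)$ is increasing in $\sigma_{\omega,h}(i)$ and decreasing in $\Gamma_{ij}(h)$, so the binding constraint comes from the node $i\in\mathcal{N}_{j,h}^{e}$ with the largest noise variance and the smallest propagated weight. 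Using the earlier-established facts $\sum_j\Gamma_{ij}^2(h)\le1$ and $\sigma_{\omega,h}^2(i)\le2\sigma_\upsilon^2+h\sigma_\theta^2$, replacing $\sigma_{\omega,h}(i)$ by this uniform upper bound only enlarges the admissible excitation region, so any $e^j$ meeting the stated inequality meets the per-node requirement for every $i$ simultaneously — hence the misjudgement probability is at most $\bar\delta_e$ for \emph{all} neighbors within $h$ hops. I would also note that $\Gamma_{ij}(h)$ is a priori unknown, so in practice one lower-bounds it by the smallest propagated weight $\underline{w}^h$ (a product of $h$ edge weights each at least $\underline w$ by Assumption~\ref{as:stable}) exactly as Algorithm~\ref{excitation-algo} does in the one-hop case.

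Finally I would reinstate the deviation offset $\Delta y_t^{\max}$: since \eqref{eq:excitation} shows it enters additively and is bounded under Assumption~\ref{as:stable}, the decision rule is shifted to comparing $|\tilde y_{t,h}^{i,\Delta}|$ against $\Delta y_t^{\max}+\Gamma_{ij}(h)e^j/2$, and the probability analysis above is unaffected because it concerns only the Gaussian fluctuation $\omega_{t,h}^i$ around that shifted mean. This completes the argument; I expect the write-up to be short, with the only subtle point being the explicit remark that taking the supremum over $i$ of the per-node threshold is what the phrase ``for all the neighbor within $h$-hop'' requires, and that monotonicity of the bound in $\sigma_{\omega,h}(i)$ and $\Gamma_{ij}(h)$ makes that supremum computable from the uniform estimates already derived.
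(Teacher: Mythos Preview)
Your core argument---replacing $w_{ij}$ by $\Gamma_{ij}(h)$ and $\sigma_\omega$ by $\sigma_{\omega,h}$ in the calculation of Theorem~\ref{th:hypothesis}---is exactly the paper's approach; the paper simply states in one sentence that $\Gamma(h)$ plays the role of the equivalent topology for the $h$-step response and then invokes Theorem~\ref{th:hypothesis} directly, without redoing the integrals.

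One point of over-interpretation: the phrase ``for all the neighbor within $h$-hop'' led you to build a genuine uniformity step (taking a supremum over $i$, bounding $\sigma_{\omega,h}(i)$ uniformly, and lower-bounding $\Gamma_{ij}(h)$ by $\underline{w}^h$). The lemma as stated, however, still carries the node-dependent quantity $\Gamma_{ij}(h)$ in the denominator, so the paper is really asserting a per-pair bound and does not carry out any worst-case argument in the proof. Your uniformity discussion is correct and is in fact the natural next step toward a computable excitation (and the paper later uses exactly this idea in Theorem~\ref{th:neighbor} via $\Gamma_{ij}^{\min}$), but it is not part of the proof of this lemma. Similarly, your handling of the offset $\Delta y_t^{\max}$ is sound but belongs to the practical decision rule rather than the lemma's probabilistic statement.
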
 
\begin{proof}
Directly focusing on the $h$-step node response after the excitation input is injected on $j$, $\Gamma(h)$ becomes the equivalent topology that corresponds to the $h$-step process. 
Based on Theorem \ref{th:hypothesis}, when $|e^j|\ge \frac{2\sqrt{2}\sigma_{\omega,h}}{\Gamma_{ij}} \text{erf}^{-1}(1-\bar \delta_e)$ ensures the misjudgement probability is no more than $(1-\bar \delta_e)$, which completes the proof. 
\end{proof}

Note that Lemma \ref{lemma:hypothesis2} only illustrates how to reduce the misjudgement probability of $i \in \mathcal{N}_{j,h}^{e}$, and does not provide information about whether $i \in \mathcal{N}_{j,h}^{out}$. 
A key insight is that if $i$ is decided not in $\mathcal{N}_{j,h-1}^{e}$ but in $\mathcal{N}_{j,h}^{e}$, then it is very likely that $i\in\mathcal{N}_{j,h}^{out}$ is true. 
Starting from this point, we utilize a single-time excitation input and do $h$-rounds tests to achieve the inference goal. 
Two auxiliary functions are defined as  
\begin{align}
\!\! &F_0(z,e^j) \!=\! \int_{\frac{ z e^j }{2}}^{+ \infty}  \!\!\frac{ 1 }{\sqrt{2\pi}\sigma_{\omega,h}} \exp{ (-\frac{r^2 }{2\sigma^2_{\omega,h} }) } \mathrm{d}r,  \\
\!\! &F_1(z,e^j)\!=\!\int_{ \frac{ z e^j }{2} }^{ + \infty } \!\! \frac{ 1 }{\sqrt{2\pi}\sigma_{\omega,h}} \exp{ (-\frac{ (r- z e^j )^2 }{2\sigma^2_{\omega,h} }) } \mathrm{d}r, 
\end{align}
where $z\in[0,1]$. 
Based on $F_0(z)$ and $F_1(z)$, the inference probability of multi-hop out-neighbors is presented as follows.

\begin{theorem}[Lower probability bound of neighbor inference]\label{th:neighbor}
Given the maximum false alarm probability $\alpha$ of hypothesis test (\ref{eq:hypothesis2}), 
if the single-time excitation input $e^j \ge {e^j_m} =\frac{2\sqrt{2}\sigma \text{erf}^{-1} (1-2 \alpha)}{\Gamma_{ij}^{\min}}$, then we have 
\begin{align}\label{eq:h-neighbor-lowerbound}
\!\! \Pr\{i\in\mathcal{N}_{j,h}^{out}\} \!\ge\! F_1( \Gamma_{ij}^{\min} , e^j_m)(2\!-\!\alpha\!-\! F_1( \Gamma_{ij}^{\max} , e^j_m)), 
\end{align}
where $\Gamma_{ij}^{\min}$ and $\Gamma_{ij}^{\max}$ are given by 
\begin{equation}
\left\{
\begin{aligned}
&\Gamma_{ij}^{\min} = \min\{ \Gamma_{ij}(l),l=1,\cdots,h\},\\
&\Gamma_{ij}^{\max} = \max\{ \Gamma_{ij}(l),l=1,\cdots,h\}.
\end{aligned}
\right.
\end{equation}
\end{theorem}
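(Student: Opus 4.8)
The plan is to cast the claim as the probability of correctly passing two one-sided tests and to estimate each factor via the Gaussian-tail functions $F_0,F_1$ introduced above. The structural starting point is that, by \eqref{eq:hop_neighbor}, $\mathcal{N}_{j,h}^{out}=\mathcal{N}_{j,h}^{e}\setminus\mathcal{N}_{j,h-1}^{e}$; consequently, when $i$ is genuinely an $h$-hop out-neighbor of $j$, the $h$-round procedure returns $i\in\hat{\mathcal N}_{j,h}^{out}$ exactly when the hop-$h$ instance of test \eqref{eq:hypothesis2} accepts $H_1(h)$ \emph{and} the hop-$(h-1)$ instance accepts $H_0(h-1)$. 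So I would write $\Pr\{i\in\mathcal{N}_{j,h}^{out}\}$ as the probability of this intersection and lower-bound the two contributions separately.

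For the hop-$h$ factor I would reuse the likelihood-ratio computation of Theorem \ref{th:hypothesis}, applied at the level of the $h$-step translation matrix $\Gamma(h)$ as in Lemma \ref{lemma:hypothesis2}: the decision statistic $\tilde y_{t,h}^{i,\Delta}$ is $N(\Gamma_{ij}(h)e^j,\sigma_{\omega,h}^2)$ under $H_1(h)$, and the threshold built from the conservative weight $\Gamma_{ij}^{\min}$ and the minimal excitation $e^j_m$ is $\tfrac{1}{2}\Gamma_{ij}^{\min}e^j_m$. Plugging $e^j_m=\tfrac{2\sqrt{2}\,\sigma\,\text{erf}^{-1}(1-2\alpha)}{\Gamma_{ij}^{\min}}$ into $F_0$ shows $F_0(\Gamma_{ij}^{\min},e^j_m)=\alpha$, so the test meets the prescribed false-alarm level; and since $i\in\mathcal{N}_{j,h}^{out}$ forces $\Gamma_{ij}(l)=0$ for $l<h$ and hence $\Gamma_{ij}^{\min}\le\Gamma_{ij}(h)\le\Gamma_{ij}^{\max}$ with $e^j\ge e^j_m$, monotonicity of the Gaussian tail in the mean (threshold held fixed) lower-bounds the detection probability by $F_1(\Gamma_{ij}^{\min},e^j_m)$. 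For the hop-$(h-1)$ factor, the same $\Gamma_{ij}(h-1)=0$ makes $\tilde y_{t,h-1}^{i,\Delta}$ zero-mean, so ``not declaring $i\in\mathcal{N}_{j,h-1}^{e}$'' is just ``the statistic stays below threshold''; its probability I would bound from below through $F_0$ at the relevant argument and re-express via $F_1(\Gamma_{ij}^{\max},e^j_m)$ using $F_1=1-F_0$ and monotonicity of $F_0$ in its first slot, which is what produces the companion factor $2-\alpha-F_1(\Gamma_{ij}^{\max},e^j_m)$. Multiplying the two lower bounds gives \eqref{eq:h-neighbor-lowerbound}; the ignored deviation term $\Delta y_t^{\max}$ only translates the thresholds and, being bounded under Assumption \ref{as:stable}, is absorbed exactly as in the one-hop discussion.

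The principal obstacle is the combination step. The statistics $\tilde y_{t,h}^{i,\Delta}$ and $\tilde y_{t,h-1}^{i,\Delta}$ are assembled from overlapping noise terms (they share $\upsilon_t$ and a common block of $\theta$'s, cf.\ the expression for $\omega_{t,h}$), so ``detect at $h$'' and ``don't detect at $h-1$'' are not independent; the argument must either certify the sign of this dependence so that the probability of the intersection dominates the product of the marginals, or replace the product by a Fr\'echet/Bonferroni bound and then check it still dominates the stated right-hand side, and making this last inequality land on $2-\alpha-F_1(\Gamma_{ij}^{\max},e^j_m)$ rather than the cruder $1-2\alpha$ is the part that needs care. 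A secondary chore is the variance bookkeeping across rounds $l\in\{h-1,h\}$ together with the fact that $\Gamma_{ij}(h)$ is not known exactly; both are handled by the uniform bounds already in hand (e.g.\ $\sigma_{\omega,h}^2\le 2\sigma_\upsilon^2+h\sigma_\theta^2$ and $\sum_j\Gamma_{ij}^2(h)\le1$) and the weight floor $\underline{w}$ from Assumption \ref{as:stable}, which furnishes the admissible $\Gamma_{ij}^{\min}$.
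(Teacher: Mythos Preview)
Your decomposition into ``detect at hop $h$'' and ``do not detect at hop $h-1$'' matches the paper, and your lower bound $\delta_d(h)\ge F_1(\Gamma_{ij}^{\min},e^j_m)$ for the first factor is exactly what the paper obtains. The gap is in your second factor. By conditioning only on the true scenario $H_0(h-1)$ (which is forced once you assume $i$ is a genuine $h$-hop out-neighbor), the most you can extract is $\Pr\{D_0(h-1)\mid H_0(h-1)\}=1-\delta_f(h-1)\ge 1-\alpha$. No amount of rewriting via $F_1=1-F_0$ and monotonicity will turn a single term bounded by $1$ into the expression $2-\alpha-F_1(\Gamma_{ij}^{\max},e^j_m)$, which can exceed $1$; the ``$2$'' has to come from somewhere, and your route does not supply it.

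The paper's proof obtains that ``$2$'' by invoking the Law of Total Probability at step $h-1$ and summing over \emph{both} hypotheses there:
\[
\Pr\{i\in\mathcal{N}_{j,h}^{out}\}=\delta_d(h)\bigl[(1-\delta_f(h-1))+(1-\delta_d(h-1))\bigr]=\delta_d(h)\bigl(2-\delta_f(h-1)-\delta_d(h-1)\bigr),
\]
after which $\delta_f(h-1)\le\alpha$ and $\delta_d(h-1)\le F_1(\Gamma_{ij}^{\max},e^j_m)$ give \eqref{eq:h-neighbor-lowerbound}. So the missing ingredient in your plan is precisely this two-branch decomposition at the $(h-1)$-step; without it you land on the weaker bound $F_1(\Gamma_{ij}^{\min},e^j_m)(1-\alpha)$ rather than the stated one. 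As for your independence concern: it is well-taken, but the paper does not resolve it either---the displayed total-probability line already factors the joint event as a product, so the paper is implicitly treating the hop-$h$ and hop-$(h{-}1)$ decisions as independent.
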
 
\begin{proof}
The proof consists of three steps. 
Denote the false alarm probability by $\delta_f(h)=\Pr\{D_1(h)|H_0(h)\}$ and the missed detection probability by $\delta_m(h)=\Pr\{D_0(h)|H_1(h)\}$.
We first prove the critical excitation magnitude for identifying the neighbors within $h$-hops. 
Then, we find the lower and upper bounds of $\delta_f(l)$ and $\delta_d(l)$.

Based on the famous Neyman-Pearson rule, with a specified $\delta_f=\alpha$, one has 
\begin{align}\label{eq:eq-alpha}
\alpha \!=\! \int_{z_0}^{+ \infty} \!\!\! \frac{ 1 }{\sqrt{2\pi}\sigma_{\omega,h}} \exp{ (-\frac{z^2 }{2\sigma^2_{\omega,h} }) } \mathrm{d}z \!=\! \frac{(1-\text{erf}(\frac{z_0}{\sqrt{2}\sigma_{\omega,h}}))}{2}. 
\end{align}
It follows from (\ref{eq:eq-alpha}) that 
\begin{align}\label{eq:error_alpha}
z_0=\sqrt{2}\sigma_{\omega,h} \text{erf}^{-1} (1-2 \alpha).
\end{align}
Due to the prior probabilities $\Pr\{H_1\}=\Pr\{H_0\}$ and based on Lemma \ref{lemma:hypothesis2}, $z_0=\frac{\Gamma_{ij}(h) e^j }{2} $ also holds at $h$-step response. 
Substituting it into (\ref{eq:error_alpha}), it yields that 
\begin{align}
e^j = \frac{2\sqrt{2}\sigma_{\omega,h} \text{erf}^{-1} (1-2 \alpha)}{\Gamma_{ij}(h)}.
\end{align}

Next, note that $F_0(z,e^j)$ decreases with $ze^j$ increasing. 
If the excitation input is designed such that 
\begin{align}
e^j_{m} = \frac{2\sqrt{2}\sigma \text{erf}^{-1} (1-2 \alpha)}{  \min\{ \Gamma_{ij}(h),h=1,\cdots,n\} },
\end{align}
then one infers that 
\begin{align}
\delta_f(l)=F_0( \Gamma_{ij}(l) , e^j_m) \le \alpha,~\forall 1\le l\le h .
\end{align}
Meanwhile, recall the detection probability $\delta_d(h)=\Pr\{D_1(h)|H_1(h)\}$ is calculated by 
\begin{align}\label{eq:detection}
\delta_d(h)\!=\!\int_{ \frac{\Gamma_{ij}(h) e^j }{2} }^{ \infty } \! \frac{ 1 }{\sqrt{2\pi}\sigma_{\omega,h}} \exp{ (-\frac{ (z- \Gamma_{ij}(h) e^j )^2 }{2\sigma^2_{\omega,h} }) } \mathrm{d}z. 
\end{align}
Since $\delta_d(h)$ increases with $\Gamma_{ij}(h) e^j_{\max}$ increasing, one has 
\begin{align}\label{eq:detection2}
F_1( \Gamma_{ij}^{\min} , e^j_m) \le \delta_d(h)\le F_1( \Gamma_{ij}^{\max} , e^j_m) .
\end{align}

Finally, utilizing the Law of Total Probability, the probability that $i$ is decided as member of $\mathcal{N}_{j,h}^{out}$ is calculated by 
\begin{small}
\begin{align}\label{eq:h-neighbor}
\!\!\!\Pr\{i\!\in\!\mathcal{N}_{j,h}^{out}\}\!=\!\Pr\{D_1(h)|H_1(h)\} \Pr\{D_0(h\!-\!1)|H_0(h\!-\!1)\} &   \nonumber  \\
+\Pr\{D_1(h)|H_1(h)\}\Pr\{D_0(h\!-\!1)|H_1(h\!-\!1)\}&.
\end{align}
\end{small}
\!\!Substitute $\Pr\{D_0(h-1)|H_0(h-1)\}=1-\delta_f(h-1)$ and $\Pr\{D_0(h-1)|H_1(h-1)\}=1-\delta_d(h-1)$ into (\ref{eq:h-neighbor}), and it yields that 
\begin{align}\label{eq:h-neighbor2}
\!\!\!\! \Pr\{i\in\mathcal{N}_{j,h}^{out}\}=&\delta_d(h)(2-\delta_f(h-1)-\delta_d(h-1))    \nonumber  \\
\ge & F_1( \Gamma_{ij}^{\min} , e^j_m)(2\!-\!\alpha\!-\! F_1( \Gamma_{ij}^{\max} , e^j_m)). 
\end{align}
The proof is completed. 
\end{proof}

Theorem \ref{th:neighbor} provides the lower probability bounds for $\Pr\{i\in\mathcal{N}_{j,h}^{out}\}$ given the maximum false alarm probability $\alpha$ of the test (\ref{eq:hypothesis2}). 
Note that the test (\ref{eq:hypothesis2}) is implemented multiple rounds to infer the neighbor within $h'$-hop, $h'=1,\cdots,h$, respectively. 
Therefore, a notable characteristic of the bounds by \eqref{eq:h-neighbor-lowerbound} is that they can be calculated recursively with just one-shot excitation input. 
The higher the hop number is, the lower the probability bound is. 
The practical application of this test is similar to (\ref{eq:1-hop-decision}) and omitted here.

\subsection{Extensions and Discussions}
Theorem \ref{th:hypothesis} and \ref{th:neighbor} illustrate the conditions and performances of using just one-time excitation. However, there are also situations where a large excitation input is not allowed in the network dynamics, making the methods not directly available. 
To overcome this deficiency, multi-excitation is a promising alternative to achieve the inference goal. 
In this part, we will briefly show how to address the issue. 

Suppose node $j$ is excited $m$ times with the same excitation input $e^j$, the inference center obtains the average observation deviation of $m$ rounds by $\tilde y^{i,\Delta}_{\bar m}= \frac{1}{m}\sum\limits_{l= 1}^{m} y^{i,\Delta}{(l)}$. 
\begin{corallary}[Upper bound of the misjudgement probability under multiple excitations]\label{th:multi-excitation}
Given excitation input $e^j>0$ and implement $m$ times of excitations, the misjudgement probability satisfies
\begin{equation}
\delta_e (m) \le 2 \int_{ \frac{ q_0 e^j}{2} }^{+ \infty} \frac{ 1 }{\sqrt{2\pi}\sigma/\sqrt{m} } \exp{ (-\frac{z^2 }{2\sigma^2 /m }) } \mathrm{d}z, 
\end{equation}
where $q_0=\min\{w_{ij}: j\in\mathcal{V}\}$. 
\end{corallary}
\begin{proof}
Based on the independent identically distributed characteristic of $y^{i,\Delta}{(l)}$, $y^{i,\Delta}{(l)}$ is subject to $N(0,\frac{\sigma^2_{\omega}}{m})$. 
Then, the misjudgement probability is calculated by 
\begin{align}
\delta_e(m)=&\Pr\{D_1|H_0\}+\Pr\{D_0|H_1\} \nonumber \\
=& \int_{ \frac{w_{ij}e^j}{2} }^{+ \infty} \frac{ 1 }{\sqrt{2\pi}\sigma_{\omega}/\sqrt{m} } \exp{ (-\frac{z^2 }{2\sigma^2_{\omega} /m }) } \mathrm{d}z \nonumber \\
& +  \int_{- \infty}^{ \frac{w_{ij}e^j}{2} } \! \frac{ 1 }{\sqrt{2\pi}\sigma_{\omega} /\sqrt{m} } \exp{ (-\frac{ (y-w_{ij}e^j)^2 }{2\sigma^2_{\omega} /m }) } \mathrm{d}z \nonumber \\
=& 2 \int_{ \frac{w_{ij}e^j}{2} }^{+ \infty} \frac{ 1 }{\sqrt{2\pi}\sigma_{\omega}/\sqrt{m} } \exp{ (-\frac{z^2 }{2\sigma^2_{\omega} /m }) } \mathrm{d}z \nonumber \\
\le &  2 \int_{ \frac{ q_0 e^j}{2} }^{+ \infty} \frac{ 1 }{\sqrt{2\pi}\sigma_{\omega}/\sqrt{m} } \exp{ (-\frac{z^2 }{2\sigma^2_{\omega} /m }) } \mathrm{d}z,
\end{align}
which completes the proof. 
\end{proof}

From Corallary \ref{th:multi-excitation}, we have that the variance $\sigma^2_{\omega} /m$ and $\delta_e(m)$ will decrease as $m$ grows. 
Therefore, it follows that 
\begin{equation}
\mathop {\lim }\limits_{m \to \infty } \delta_e(m) =0. 
\end{equation}
Corollary \ref{th:multi-excitation} illustrates that even when the magnitude of the excitation input is constrained, the misjudgment probability can be significantly reduced by increasing the excitation times. 
Due to $w_{ij}$ is not priorly known, we can relax the decision threshold as in (\ref{eq:1-hop-decision}). 
Given the maximum available excitation input $e^j_{\max}$ and specified weight threshold $\underline{w}_{ij}$, one has with probability at least $1-\bar\delta_{e,m}$ to discriminate whether $i\in\mathcal{N}_j^{out}$ by the following multiple excitation testing
\begin{equation}\label{eq:multi-time-decision}
\!\left\{
\begin{aligned}
&i\in\mathcal{N}_j^{out},&&\text{if}~|\tilde y^{i,\Delta}_{\bar m}| \ge \frac{ \sum\nolimits_{l= 1}^{m} \Delta y^{\max}(l)  }{m} + \frac{\underline{w}_{ij}e^j_{\max}}{2}, \\
&i\notin\mathcal{N}_j^{out},&&\text{else},
\end{aligned}
\right.
\end{equation}
where $\bar\delta_{e,m}=2 \int_{ \frac{ \underline{w}_{ij} e^j_{\max} }{2} }^{+ \infty} \frac{ 1 }{\sqrt{2\pi}\sigma_{\omega}/\sqrt{m} } \exp{ (-\frac{z^2 }{2\sigma^2_{\omega} /m }) } \mathrm{d}z$. 
A minor drawback of this method is that if the weight between two nodes is small and the excitation time is also limited, an existing edge may be regarded as not existing.

\begin{figure*}[t]
\centering
\subfigure[One-hop neighbor inference of a node using different excitation inputs.]{\label{excitation_asymptotic}
\includegraphics[width=0.32\textwidth]{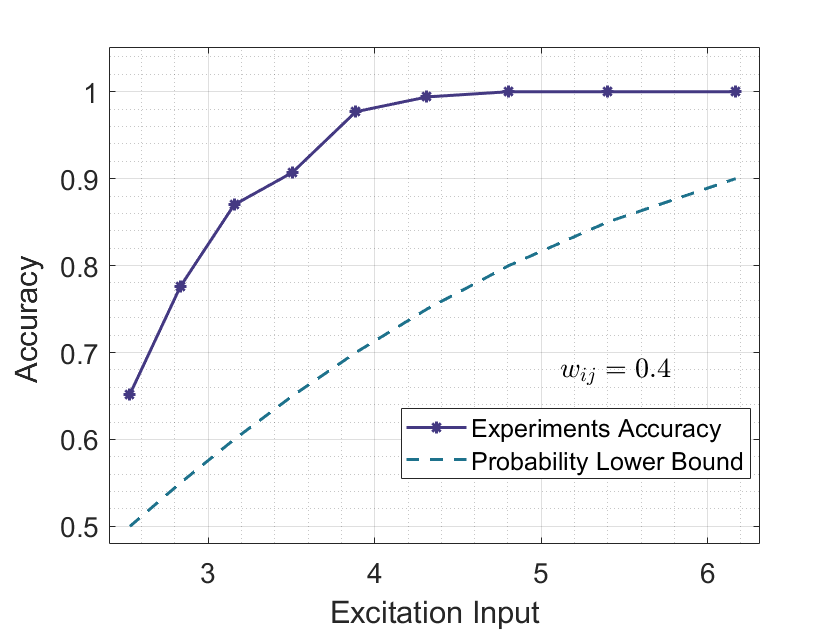}}
\subfigure[Multi-hop neighbor inference of a node using the same excitation input.]{\label{excitation_multi}
\includegraphics[width=0.32\textwidth]{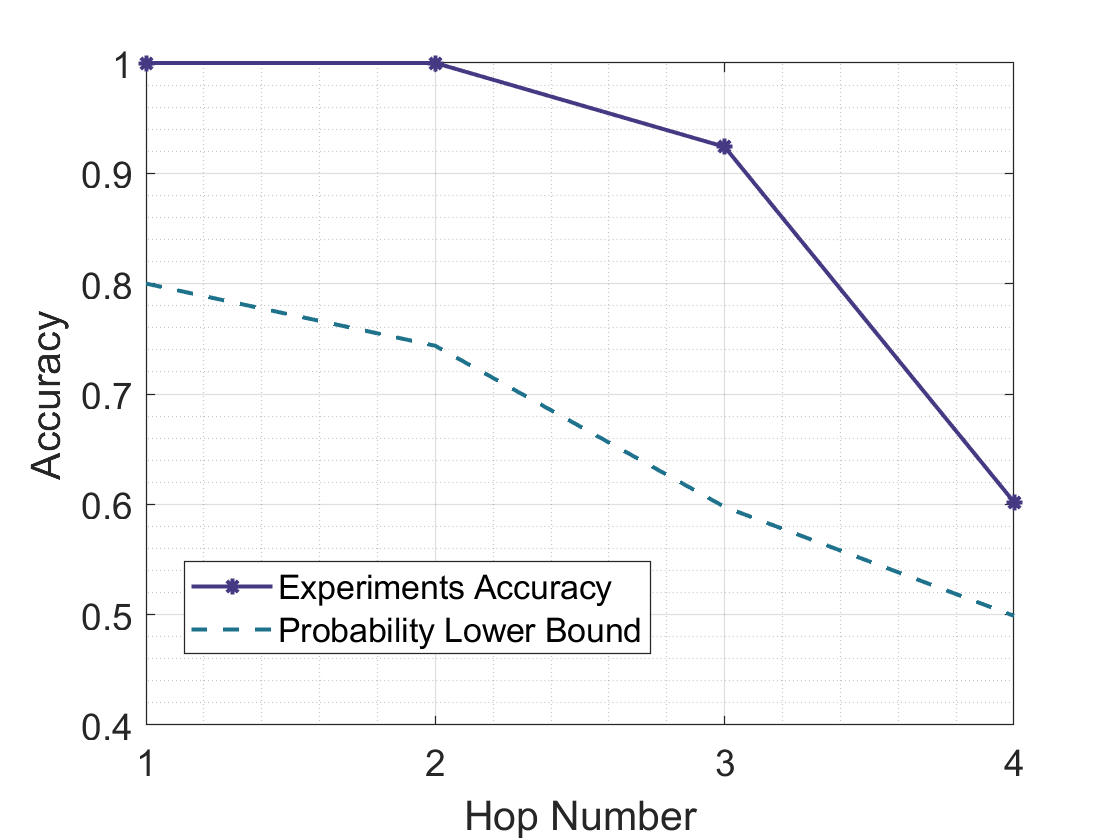}}
\subfigure[Comparisons of the topology inference without and with the excitation based method.]{\label{fig:improve}
\includegraphics[width=0.32\textwidth]{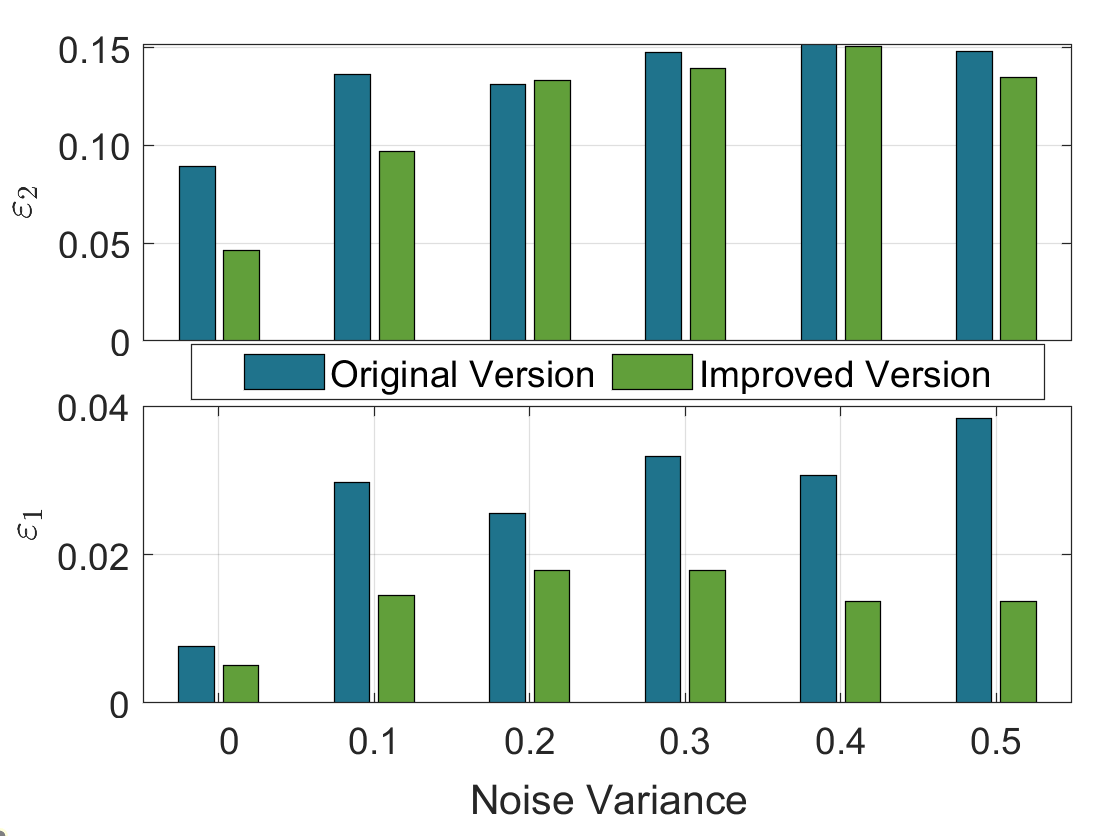}}
\caption{Simulation results of the proposed excitation-based method. 
The experiment accuracy in (a)-(b) is obtained by implementing the hypothesis test $1000$ times and then computing the ratio of positive results. 
}
\label{final_excitation}
\vspace*{-10pt}
\end{figure*}

Finally, we illustrate how to use the excitation method to prove the performance of existing inferences. 
Suppose the observer has gained the observations from $0$ to $t$ moments. 
Traditionally, the inference problem can be formulated as solving the ordinary least square problem 
\begin{equation}\label{eq:olsprob}
\hat{W}=\mathop {\arg }\limits_{W} \min \sum\nolimits_{m = 1}^{t+1} \| y_t - W y_{t-1}  \|^2_2. 
\end{equation}
By the excitation-based method, we can inject the excitation input $e^j$ on node $j$ at moment $t$. 
Based on the observation $\tilde{y}_{t+1}$, the results of the excitation based inference method are utilized to solve the following constrained least square problem
\begin{subequations} \label{eq:improve}
\begin{align}
\mathop {\min }\limits_{W} ~& \sum\nolimits_{m = 1}^{t} \| y_t - W y_{t-1}  \|^2_2  \\
\text{s.t.}~~&W_{ij}>0,~\text{if}~ |\tilde y^{i,\Delta}_{t+1}| \ge \Delta y_{t}^{\max} + \frac{\underline{w}_{ij} |e^j|}{2}, \label{eq:improve_b}\\
&W_{ij}=0,~\text{if}~ |\tilde y^{i,\Delta}_{t+1}| < \Delta y_{t}^{\max} + \frac{\underline{w}_{ij} |e^j|}{2}. \label{eq:improve_c}
\end{align}
\end{subequations}
By solving problem \eqref{eq:improve}, the final inferred global topology has smaller errors compared with that of \eqref{eq:olsprob}. 

\begin{remark}
The key insight of improving the inference accuracy of \eqref{eq:olsprob} lies in that the topology is estimated in the independently row-by-row manner (i.e., solving $W_{[i,:]}$). 
Since the connections between $j$ and $\mathcal{N}_{j}^{out}$ constitute a column of $W$, the explicit constraints \eqref{eq:improve_b} and \eqref{eq:improve_c} for $w_{ij}$ reduce the uncertainty of all other elements in $i$-th row of $W$, thus making the global inference accuracy improved. 
\end{remark}

\section{Numerical Simulations}\label{sec:simulation}
In this section, we present numerical simulations to demonstrate the performance of the analytical results. 
First, we display the basic setup. 
Then, we conduct groups of experiments under different conditions, including the system stability and noise variance. 
Detailed analysis is also provided to demonstrate the performance of the proposed method. 

The most critical components are the adjacent matrix $A$ and the interaction matrix $W$.
For the setting of interaction matrix $W$, we randomly generate a directed topology structure with $|\mathcal{V}|=20$, and the weight of $W$ is designed by the Laplacian rule. 
To save space, we mainly present the results of the case $W\in \mathcal{S}_m$ (the results of case $W\in \mathcal{S}_a$ are likewise). 
For generality, the initial states of all agents are randomly selected from the interval $[-100,100]$, and the variance of the process and observation noise satisfy $\sigma_{\theta}^2=1$ and $\sigma_{\upsilon}^2=1$.

Now, we move on to verify the performance excitation-based method, as shown in Fig.~\ref{final_excitation}. 
First, we excite a target node $j$ and wish to find its one-hop out-neighbor $i$ subject to $w_{ij}\ge0.4$. 
Given the lower probability bound $\bar \delta_e$, the critical excitation input is calculated by $|e^j|=\frac{2\sqrt{2} \sigma_{\omega({i}) }}{w_{ij}} \text{erf}^{-1}(1-\bar \delta_e)$. 
We use the input to conduct the hypothesis test $1000$ times and compute the ratio of positive results. 
As one expects, considering the same one-hop neighbor connection to be inferred, larger excitation input ensures higher accuracy of the decision results, as shown in Fig.~\ref{excitation_asymptotic}. 
Next, the multi-hop neighbor inference results are provided in Fig.~\ref{excitation_multi}. 
It is easy to see that given the maximum false alarm probability $\alpha$ and under the same excitation input, the accuracy for multi-hop neighbor inference will decrease as the hop number grows, which corresponds to the common intuition. 
The probability lower bound here is computed by \eqref{eq:h-neighbor-lowerbound}. 
We note that the dashed lines in Fig.~\ref{excitation_asymptotic} and Fig.~\ref{excitation_multi} are lower bounds of the accuracy in theory. 
Thus it makes sense that the actual accuracy in experiments is higher than that bound. 
The multiple excitation cases are likewise and are omitted here. 

Finally, we provide the results of improving the inference performance of the causality based estimator in \cite{lys-cdc2021} to solve \eqref{eq:olsprob}. 
Here we directly present the case $W\in \mathcal{S}_m$ and consider the following two error indexes
\begin{align}
{\varepsilon _1} &=({{\| {\text{sign}({\hat W}) - \text{sign}(W) \|}_0}})/n^2, \\
{\varepsilon _2} &= ({\| {{\hat W} - W} \|_{\text{Frob}} }) / {\| W\|_{\text{Frob}}},
\end{align}
which represents the structure and magnitude errors, respectively. 
As we can see from Fig.~\ref{fig:improve}, with the same observations, the inference error is largely reduced by combining the excitation based inference results to solve \eqref{eq:improve}, especially in terms of the structure error.

\section{Conclusions}\label{sec:conclusion}
In this paper, we investigated the topology inference problem of NDSs by using very few excitations. 
First, we introduced the definition of $h$-hop neighbor and proposed the one-shot excitation based method. 
By utilizing the tool of hypothesis testing, we proved the magnitude condition of the excitation input with probability guarantees. 
Then, we extended the one-hop inference method to $h$-hop neighbor and multiple excitations cases. 
The inference accuracy was rigorously analyzed. 
Finally, the performance study by simulations verified our performance analysis. 
The proposed inference method is helpful in scenarios of insufficient observations over NDSs, and can also be used as auxiliary means to improve the accuracy of existing methods. 
Future directions include extending the method to infer the specific values of the global topology, and making the few excitations cooperate to finish the inference task.




\balance


\end{document}